\def\va{{\bm{a}}}
\def\vp{{\bm{p}}}
\def\vs{{\bm{s}}}
\def\vu{{\bm{u}}}
\def\vv{{\bm{v}}}
\def\vx{{\bm{x}}}
\def\vz{{\bm{z}}}
\renewcommand\vec\bm
\def\cA{{\mathcal{A}}}
\def\cO{{\mathcal{O}}}
\def\cP{{\mathcal{P}}}
\def\cT{{\mathcal{T}}}
\def\cX{{\mathcal{X}}}
\newcommand{\delimit}[3]{\newcommand{#1}[1]{\left#2##1\right#3}}
\let\E\relax
\DeclareMathOperator*{\E}{\mathbb E}
\renewcommand{\R}{\mathbb R}
\let\ip\ev
\newcommand{\ind}{\mathbb I\@ifstar\@@ind\@ind}
\newcommand{\@ind}[1]{\qty{#1}}
\newcommand{\@@ind}[1]{\{#1\}}
\newcommand{\ie}{{\em i.e.}\xspace}
\newcommand{\eg}{{\em e.g.}\xspace}
\newcommand{\eps}{\varepsilon}
\DeclareMathOperator*{\argmax}{argmax}
\theoremstyle{plain}
\newtheorem{theorem}{Theorem}[section]
\newtheorem*{theorem*}{Theorem}
\newtheorem{proposition}[theorem]{Proposition}
\newtheorem{lemma}[theorem]{Lemma}
\theoremstyle{definition}
\newtheorem{definition}[theorem]{Definition}
\renewcommand{\Comment}[1]{\textcolor{gray}{$\triangleright$ #1}}
\newcommand{\principal}{{0}}
\let\cite\citep
\def\thm@space@setup{%
  \thm@preskip=\parskip \thm@postskip=0pt
}
\title{Learning a Game by Paying the Agents}
\author{Brian Hu Zhang\thanks{Equal contribution. Order determined randomly. } \\
Massachusetts Institute of Technology \\
\texttt{zhangbh@csail.mit.edu} \\
\And
Tao Lin\footnotemark[1] \\
Microsoft Research \& \\
The Chinese University of Hong Kong, Shenzhen \\
\texttt{lintao@cuhk.edu.cn}
\And 
Yiling Chen \\
Harvard University \\
\texttt{yiling@seas.harvard.edu} \\
\And
Tuomas Sandholm \\
Carnegie Mellon University \& \\
Strategy Robot, Inc., \\
Strategic Machine, Inc., \\ Optimized Markets, Inc \\
\texttt{sandholm@cs.cmu.edu}
}
\begin{document}

\maketitle

\begin{abstract}

We study the problem of learning the utility functions of no-regret learning agents in a repeated normal-form game.
Differing from most prior literature, we introduce a principal with the power to observe the agents playing the game, send agents signals, and give agents {\em payments} as a function of their actions.
We show that the principal can, using a number of rounds polynomial in the size of the game, learn the utility functions of all agents to any desired precision $\varepsilon > 0$, for \emph{any} no-regret learning algorithms of the agents.
Our main technique is to formulate a zero-sum game between the principal and the agents, where the principal chooses strategies among the set of all payment functions to minimize the agent's payoff. 
Finally, we discuss implications for the problem of {\em steering} agents. We introduce, using our utility-learning algorithm as a subroutine, the first algorithm for steering arbitrary no-regret learning agents to a desired equilibrium without prior knowledge of their utility functions.
\end{abstract}

\section{Introduction}

Most literature on game theory aims to understand the behavior of agents in a game given the preferences of the agents. That is, {\em knowing} what the agents want, how will they act? In this paper, we consider the inverse of this problem: if all we can observe is how agents act, can we infer what they want, that is, their utility functions? The problem of infering agents' utility functions from their behavior is known as ``learning from revealed preferences'' (e.g., \citet{beigman_learning_2006, hutchison_efficiently_2012}) or ``inverse game theory'' (e.g., \citet{Kuleshov15:Inverse}). %
Despite a long history, these two lines of work have some limitations.  First, they often assume that the observed behavior of the agents is \emph{(Nash) equilibrium behavior}. This is arguably unrealistic, especially because Nash equilibria are \PPAD-hard to compute \cite{Daskalakis06:Complexity, Chen09:Settling}. 
Second, they mostly focus on \emph{passive} problems, aiming to learn agents' utility functions from a fixed set of behavioral data. Often, this creates trivial impossibility results, stemming from the fact that the behavioral data available is simply not enough to determine the agents' preferences. 

In this paper, we consider an {\em active, non-equilibrium} inverse game problem.
A principal observes the actions of agents playing an unknown repeated game, %
and seeks to learn the agents' utility functions from %
those observations.
We do not assume the agents to play equilibria.
Instead, they can use \emph{any no-regret algorithms} to learn to choose actions.
The principal can actively modify the unknown game by giving {\em payments} to the agents as a function of their play,
as well as providing {\em signals}, in the spirit of correlated equilibria~\cite{Aumann74:Subjectivity}. 
The signaling scheme and payment scheme can change from round to round, depending on the agents' past actions. 
Under this setup, we ask:
\begin{quote}
    \itshape
    \centering Can a principal learn the utility functions of no-regret learning agents?
\end{quote}
We will give a positive answer to this question, by designing an algorithm for the principal to learn the game via payments and signals. 
Then, we will apply our algorithm to {\em steering} no-regret learning agents toward desirable outcomes, a problem introduced by \citet{Zhang24:Steering}. Building on their results, we will show that it is possible to steer agents to optimal outcomes even without prior knowledge of their utility functions.

\subsection{Overview of our results}

In our model, agents play a fixed normal-form game $\Gamma$ repeatedly over $T$ rounds, using arbitrary no-regret learning algorithms.  
A principal, initially knowing nothing about the agents' utility functions, can give non-negative {\em payments} to the agents, that get added to the agents' utilities, to influence the agents' behavior.
The principal aims to learn the utility functions of all agents to a given target precision $\eps > 0$, based on the actions taken by the agents. 

Learning the utility functions completely is impossible, because agents' incentives only depend on {\em relative} utilities between their actions, not {\em absolute} utilities. Thus, we only demand that the principal output utility functions that yield a {\em strategically-equivalent} game, that is, one in which all agents' relative utilities are identical to those in the true game. Equivalently, we identify each agent's utility function up to a term that does not depend on that agent's action.

Our main result is an efficient algorithm for the principal to learn the game by paying the agents.
Let $n$ be the number of agents, $m_i$ be the number of actions of each agent $i$, and $M = \prod_i m_i$ be the total number of action profiles.  Assume that each agent's regret in $T$ rounds is at most $\cO(\sqrt{T})$ (satisfied by typical no-regret algorithms). Our algorithm learns the game in polynomially many rounds: 

\begin{theorem}[Informal version of \Cref{th:mp no regret,th:no regret lower bound}]
There exists a principal algorithm that learns a game to precision $\eps$ in $M^{\cO(1)}/\eps^2$ rounds. This is tight up to the exponent hidden by the $\cO$.
\end{theorem}

The main idea of our algorithm is surprisingly simple but powerful.
In the single-agent case, we let the principal play a zero-sum game with the agent, where the agent chooses actions to maximize its reward, which is utility plus payment, while the principal chooses payments to minimize the agent's reward. This game admits a unique equilibrium where the principal's payment function is equal to the negation of the agent's utility function. By running no-regret algorithms against each other, the principal and agent can reach such an equilibrium, so the negative average payment function will be an accurate estimate of the agent's utility function. 
In the multi-agent case, we use signals to separate the learning problems for different agents, reducing the problem to the single-agent case.

We then turn to a motivating application, which is the problem of {\em steering} no-regret learners to desirable outcomes, introduced by \citet{Zhang24:Steering}. Departing from them, we do not require the principal to have prior knowledge of the agents' utility functions. We define a solution concept called {\em correlated equilibrium with payments} (CEP), in which the principal has a utility function, and wishes to optimize its utility minus the amount of payment that it must give. Departing from \citet{Zhang24:Steering} again, it is possible for the total amount of payment to be nonzero in equilibrium (\ie, linearly increasing in $T$), so long as the corresponding increase in principal utility is large enough to justify the payments. We then show that the principal-optimal CEP exactly characterizes the value (averaged across timesteps) that the principal can achieve in the limit $T \to \infty$:

\begin{theorem}[Informal version of \Cref{th:steering,thm:CEP-upper-bound}]
   Let $F^*(\Gamma)$ be the objective value for the principal in the principal-optimal CEP in game $\Gamma$. Then, against no-regret agents, 
   \begin{itemize}[topsep=0pt, parsep=0pt, leftmargin=15pt]
       \item even knowing the game $\Gamma$ exactly, the principal cannot achieve time-averaged value better than $F^*(\Gamma) + \poly(M) \cdot T^{-1/2}$, and
       \item with no prior knowledge of the agents' utilities, there exists an algorithm for the principal to achieve time-averaged value at least $F^*(\Gamma) - \poly(M) \cdot T^{-1/4}$.  
   \end{itemize}
\end{theorem}

All our algorithms are implementable by the principal
in $\poly(M)$ running time.
To our knowledge, our Result 1 is the first positive result for utility-learning with arbitrary no-regret agents. 
Our Result 2 is the first in steering any no-regret agents without prior knowledge of their utilities, 
and the first exact characterization of the optimal value achievable by the principal in the steering problem.

\subsection{Related research}\label{sec:related}

\textit{Steering agents to achieve desirable outcomes} is an important subject of study in economics, computer science, and control theory (e.g., \citet{Monderer03:k, Zhang24:Steering, canyakmaz2024learning, yao_single-agent_2025}).
In particular, \citet{Zhang24:Steering} introduced the problem of steering no-regret learners via payments. 
Critically, most prior works on steering assume that the principal knows the agents' utility functions. 
We study how the agents' utility functions can be learned, the solution to which will serve for any downstream applications including steering. 

Besides the aforementioned works, another literature about learning agents' utility functions from their behavior is \textit{learning in Stackelberg games}, where the principal cannot pay the agents but can influence the agents' actions by changing its own strategy, and the agents respond myopically \citep{Letchford09:Learning, peng_learning_2019} or by learning algorithms \citep{Haghtalab22:Learning}. 
Strong impossibility results are known for this problem: without regularity conditions or without knowing the details of the agents' learning algorithms, the principal cannot learn the agent's utility function \citep{zhang2025learning}. 
In contrast, we consider a setting where the principal gives payments but does not take actions.
We show that the use of payment makes a significant difference: the principal can now learn the utility functions of any no-regret agents without knowing their behavioral details. 

Even with payment, the problem of learning from learning agents is still challenging. 
As we will discuss more in Section~\ref{sec:no regret}, %
a key obstacle is the non-forgetfulness of agents' no-regret algorithms.  The payment given to the agents in the past affect their future behavior.  Non-forgetfulness is a known issue in multi-agent learning dynamics \cite{wu_multi-agent_2022,cai2024fast, scheid2024learning}. 
We overcome this obstacle by designing a zero-sum-game-based learning algorithm for the principal and using signals to influence agents, without requiring the agents' algorithms to be forgetful. 

Our use of signals to influence agents is inspired by \textit{information design} (\eg, \citet{kamenica_bayesian_2011}).
In fact, \citet{feng_online_2022, Bacchiocchi24:Online, alanqary2026learning} studied how to learn agents' utility functions by providing signals to shape agents' beliefs and behaviors %
in repeated games. 
While they consider myopically best-responding agents, we allow any no-regret learning agents, a more challenging setting necessitating the combination of signals and payments.

The literature on \textit{playing against no-regret agents} studies how the principal should play a game if they know the agents' utilities and some properties of the agents' algorithms (\eg, \citet{braverman_selling_2018, Deng19:Strategizing, Mansour22:Strategizing, dandrea_playing_2023, lin2025generalized, arunachaleswaran_learning_2025}).
For example, \citet{Deng19:Strategizing} show that, if agents run {\em mean-based} no-regret algorithms, then the principal can gain {\em more} than the Stackelberg value in a Stackelberg game. 
Our algorithms and setting, on the other hand, consider {\em worst-case} no-regret agent behaviors. %
While most of the cited papers consider principal-agent problems with a single agent and with no payment, we consider arbitrary payment-argumented multi-agent games.

\section{Preliminaries}

\paragraph{Notations.}
Throughout this paper, $\tilde \cO$ and $\tilde\Omega$ hide factors logarithmic in their argument. The symbol $[n]$ denotes the set of positive integers $\{ 1, \dots, n\}$. The notation $f \lesssim g$ means $f \le \cO(g)$, and $f \gtrsim g$ means $f \ge \Omega(g)$. For a vector $\vv \in \R^m$, its $i$-th component is denoted by $\vv[i]$ or $v_i$. Vector of ones and zeros are $\vec 1 = (1, \ldots, 1)$ and $\vec 0 = (0, \ldots, 0)$.  $\ind{\cdot}$ is the indicator function, \ie, for a statement $p$, $\ind{p} = 1$ if $p$ is true and $0$ if $p$ is false. %

\paragraph{\bf Normal-form games.}
A normal-form game $\Gamma = (n, A, U)$ consists of a set of $n$ {\em agents}, or {\em players}, which we will identify with the set of integers $[n]$. Each agent $i$ has an action set $A_i$ of size $m_i \ge 2$. We will let $m := \max_i m_i$ and $M = \prod_i m_i$. A tuple $\va \in A := A_1 \times \dots \times A_n$ is an {\em action profile}. Each agent has a {\em utility function} $U_i : A \to [0, 1]$, denoting the utility for agent $i$ when the agents play action profile $\va \in A$. A {\em mixed strategy} of agent $i$ is a distribution $\vx_i \in \Delta(A_i)$. We will overload the utility function $U_i$ to accept mixed strategies, so that $U_i(\vx_1, \dots, \vx_n)$ is the expected utility for agent $i$ when every agent $j \in [n]$ independently samples $a_j \sim \vx_j$. As is standard in game theory, we use $-i$ to refer to the tuple of all agents except $i$. For instance, $U_i(a_i', \va_{-i})$ is the utility of agent $i$ when agent $i$ plays action $a_i'\in A_i$ and other agents play $\va_{-i} \in A_{-i} = \bigtimes_{j \ne i} A_j$. 

\paragraph{No-regret learning.}
In {\em no-regret learning}, a learner has a convex compact strategy set $\cX \subset \R^m$, and interacts with a possibly adversarial environment. On each timestep $t$, the learner selects a strategy $\vx^t \in \cX$. Simultaneously, the environment, possibly adversarially, selects a linear utility vector $\vu^t \in \R^m$, which we assume to be bounded: $|\ip*{\vu^t, \vx}| \lesssim 1$. %
The learner's {\em regret} after $T$ timesteps is 
$R(T) = \max_{\vx \in \cX} \sum_{t=1}^T \ip{\vu^t, \vx - \vx^t}$. 
We say that the learner is running a {\em no-regret algorithm} if its average regret $R(T) / T \to 0$ in the limit $T \to \infty$ for all possible sequences $(\bm u^1, \ldots, \bm u^T)$.

A well-known no-regret algorithm for general strategy set $\cX$ is {\em projected gradient descent/ascent} \cite{zinkevich2003online}. Projected gradient ascent selects $\vx^1 \in \cX$ arbitrarily, and for every timestep $t > 1$ sets $\vx^t = \Pi_\cX[\vx^{t-1} + \eta \vu^{t-1}]$ where $\Pi_\cX$ is the $\ell_2$-projection operator into $\cX$. With step size $\eta=B/(G\sqrt{T})$, where $B$ is the $\ell_2$-diameter of $\cX$ and $G$ bounds the $\ell_2$-norm of $\vu^t$, %
the algorithm achieves a regret of $R(T) \lesssim BG \sqrt{T}$.  

To apply no-regret learning to games, each agent $i$ runs a no-regret algorithm over their mixed strategy set $\cX = \Delta(m_i)$. For this special case, the most common algorithm %
is the {\em multiplicative weights update} (MWU) algorithm (\eg,~\citet{Freund99:Adaptive}), which sets $\vx^t \propto \exp(\eta \sum_{\tau < t} \vu^\tau)$, where $\eta=\sqrt{\log(m_i)/T}$ is an appropriately-chosen step size and $\exp(\cdot)$ is the element-wise exponential function. Multiplicative weights achieves regret bound $R(T) \lesssim \sqrt{T \log m_i}$. %

\section{Our Problem: Learning from Learning Agents}
\label{sec:main-problem}

We study a setting where the principal does not initially know anything about the agents' utility functions $U_i$ except for boundedness.
The principal knows the number of agents $n$ and their action sets $A_i$, oversees the agents playing the game repeatedly over $T$ rounds, and can provide \emph{payments} and \emph{signals} to influence the agents' behavior, in order to learn the utility functions of the agents.
More formally, in each round $t = 1, \dots T$, the following events happen in order:
\begin{enumerate}
    \item The principal selects {\em payment function} $P_i^t : A_i \to [0, B]$ for each agent $i$, where $B$ is a large constant.\footnote{All our positive results will only require $B=2$, since the utility functions are bounded by $1$. } The payment $P_i^t(a_i)$ is added to agent $i$'s reward, creating a new game $\Gamma^t$ with utility functions given by $U^t_i(a) := U_i(a) + P^t_i(a_i)$.
    \item The principal sends a \emph{signal} $s_i^t \in S_i$ to each agent $i$.
    \item Observing $s_i^t$ (but not $P_i^t$), each agent $i$ %
    selects a mixed strategy $\vx^t_i \in \Delta(A_i)$.
    \item The principal observes the joint mixed strategy $\vx^t$. Each agent $i$ gets reward $U_i^t(\vx^t)$. %
\end{enumerate}

\paragraph{Agents' contextual no-regret learning} 
    We assume that each agent selects $\vx_i^t$ using a no-regret learning algorithm, or more precisely, a \emph{contextual} no-regret learning algorithm where signals are contexts and agents achieve no regret given any signal/context.
    Formally, there is a (possibly game-dependent) constant $C \le \poly(M)$ such that, for every agent $i$, every signal $s_i \in S_i$, every time step $t \le T$,\footnote{\emph{Anytime} no-regret is not a strong requirement. We show in Appendix \ref{app:anytime} that any algorithm with the usual no-regret guarantee under adversarial environments automatically satisfies anytime no-regret. }
    \begin{align}\label{eq:regret-bound}
        \hat R_i(t, s_i) ~ := ~ \max_{a_i \in A_i} \sum_{\tau \le t: s_i^\tau = s_i} \qty[U_i^\tau(a_i, \vx_{-i}^\tau) - U_i^\tau(\vx^\tau)] ~ \le ~ C \sqrt{T}.
    \end{align}
    One way to achieve this guarantee is for each agent to run $|S_i|$ independent no-regret algorithms, one for each signal.
    However, we do not restrict to any specific algorithms: agents can run \emph{any} algorithms satisfying the above no-regret property, such as projected gradient ascent, MWU, Exp3, and so on. In particular, agents can run full-feedback or bandit-feedback no-regret algorithms; our results are unaffected. Indeed, the agents' learning algorithms need not even be independent; they could choose their actions using a centralized algorithm.
    As typical no-regret algorithms do not require knowledge of the utility functions (they operate on the feedback received after each round), the agents can also be initially ignorant of their own utility functions $U_i$, just as the principal is.

\label{sec:model remarks}

\paragraph{Mixed strategies and agent randomization} 

Our model stipulates that the principal observes the full joint mixed strategy $\vx^t$, instead of a sampled pure strategy profile $\va^t \sim \vx^t$. This stipulation, however, is not at all vital and could be removed with only minimal effect on the results. In particular, suppose that each agent $i$, instead of announcing a mixed strategy $\vx^t_i$ in each round, samples an action $a_i^t \sim \vx^t_i$ and announces the ``mixed strategy'' that assigns all mass to $a_i^t$. Then the principal only observes $a_i^t$. The difference now is that, since the agents are randomizing, the regret bound \eqref{eq:regret-bound} cannot hold {\em deterministically}; there will be some stochastic approximation error term that can be bounded by Azuma-Hoeffding inequality, so \eqref{eq:regret-bound} only holds {\em with high probability}. In that setting, one can think of the results of this paper as conditional on the high-probability event that \eqref{eq:regret-bound} holds. 

In particular, lower bounds that apply to the setting where the principal only observes the realized action profile also apply to the setting where the principal observes the mixed strategy profile. We will use this fact freely to prove lower bounds, in \Cref{th:no regret lower bound}.

\paragraph{Our goal}
We aim to design algorithms for the principal to learn the agents' utility functions $U_i$ by designing the payment functions $P_i^t$ and signals $s_i^t$,
for any no-regret learning agents. 
However, this goal as currently stated is impossible, 
because agents' regrets are only affected by the {\em utility differences} between actions,
namely, $U_i(a_i, \va_{-i}) - U_i(a_i', \va_{-i})$,
and the behaviors of typical no-regret algorithms (such as MWU) only depend on those differences.
In other words, if we create another game $\tilde\Gamma$ with $\tilde U(a_i, \va_{-i}) = U_i(a_i, \va_{-i}) + W_i(\va_{-i})$ for all $\va \in A$, where $W_i : A_{-i} \to \R$ is an arbitrary function not depending on agent $i$'s action, there is no way to distinguish $\Gamma$ from $\tilde\Gamma$ using only agents' behavioral data -- that is, games $\Gamma$ and $\tilde\Gamma$ are {\em strategically equivalent}. Thus, we can only determine utility functions {\em up to} strategic equivalence.

Formally, given a game $\Gamma$ and precision $\eps$, we say that the principal's algorithm {\em $\eps$-learns} the game $\Gamma$ if it outputs utility functions $\tilde U_i : A \to \R$ such that there exist functions $W_i : A_{-i} \to \R$ satisfying
\begin{equation} \label{eq:eps-learning-goal}
    \abs{U_i(\va) + W_i(\va_{-i}) - \tilde U_i(\va)} \le \eps, \quad \forall i\in[n], ~~ \forall \va \in A. 
\end{equation}
The goal of the principal is to $\eps$-learn $\Gamma$ in as few rounds as possible. %
Learn a game up to strategic equivalence is sufficient for most practical purposes. 
For example, the sets of $\eps$-Nash equilibria and $\eps$-correlated equilibria are invariant under strategic equivalence. %
Learning under strategic equivalence also enables {\em steering} the learners toward %
certain outcomes, an application in Section \ref{sec:steering}.

While our paper focuses on the problem of minimizing the number of {\em rounds} it takes to learn the game, 
an alternative goal might be to minimize the total payment to do so. 
However, we show in \Cref{app:payment} that  the minimal achievable payment is upper-bounded and lower-bounded by the minimal number of rounds up to constant factors, so these two problems are quantitatively similar.

\section{Learning a Game by Paying No-Regret Learners}\label{sec:no regret}

We design efficient algorithms for the principal to $\eps$-learn a game by paying no-regret learning agents. 
We will start with the single-agent case to illustrate the main ideas of our algorithms, before proceeding to the multi-agent case. 

\subsection{The single-agent case}\label{sec:1p no regret}
In the single-case case ($n=1$), 
it is more convenient to view the single agent's utility function as a vector $\vu \in [0, 1]^m$, and similarly the payment $P^t : [m] \to \R$ as vector $\vp^t \in \R^m$ and total utility $\vu^t := \vu + \vp^t$.
To simplify notations, we subtract the average utility of all actions from the utility of each action: $\vu \gets \vu - \frac{\ip{\vec 1, \vu}}{m} \vec 1$, so that $\vu \in [-1, 1]^m$ and $\ip{\vec 1, \vu} = 0$. By the discussion before \eqref{eq:eps-learning-goal}, this does not change the principal's learning problem.
Our algorithm will not need signaling in the single-agent case, so it will be enough to set $|S_i| = 1$.

The main challenge of learning a game from a no-regret agent is the history-dependency of the agent's behavior. To fix ideas, suppose the agents has two actions $A$ and $B$ with unknown utility gap $u_A - u_B = \Delta > 0$. A straightforward attempt to learn the gap $\Delta$ is to try different payments for action $B$ (while paying $0$ for action $A$) in a binary-search manner: the payment at which the agent just starts to play action $B$ should reveal the value of $\Delta$.  However, that approach does not work for a no-regret learning agent because no-regret algorithms may not respond instantaneously to changes to the payment function. Historical payments affect the agent's future behavior. Moreover, the agent may incur {\em negative} regret over time, making it difficult to learn anything from the agent's behavior on subsequent rounds. For example, if an agent has regret $-K$ for all actions, then one cannot say anything at all about how the agent will behave in the next $K$ rounds.

The key idea of our algorithm is to \emph{imagine the principal and the agent as playing a zero-sum game} where the principal selects the payment function $\vp$ from some set $\cP$ to be specified later, the agent selects mixed strategy $\vx \in \Delta(m)$, the agent's utility is given by $\ip*{\vu + \vp, \vx}$, and the principal's utility is $-\ip*{\vu + \vp, \vx}$.  Call this game $\Gamma_0$. In particular, by setting $\cP = \{ \vp \in [0, 2]^m: \ip{\vec 1, \vp} = m\}$, the zero-sum game $\Gamma_0$ has the following property: 
\begin{lemma}\label{lem:zerosum equilibrium}
    In the zero-sum game $\Gamma_0$, every $\eps$-Nash equilibrium strategy for the principal has the form $\vp = \vec 1 - \vec u + \vz$, where $\norm{\vz}_1 \le 4m\eps$. 
\end{lemma}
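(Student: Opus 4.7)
The plan is to compute the minmax value and the unique minmax strategy of $\Gamma_0$, and then translate the $\eps$-Nash condition into a bound on $\norm{\vp - \vp^*}_1$. First I would observe that the agent's best response to any $\vp$ concentrates on $\argmax_i (u_i + p_i)$, so the principal's minmax problem reduces to $\min_{\vp \in \cP} \max_i (u_i + p_i)$. The constraint $\ip{\vec 1, \vp} = m$, combined with the normalization $\ip{\vec 1, \vu} = 0$ imposed earlier in the section, forces $\sum_i (u_i + p_i) = m$, so the maximum coordinate of $\vu + \vp$ is at least its average, which is $1$, with equality if and only if $u_i + p_i = 1$ for every $i$. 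Hence the value of the game is $V = 1$, achieved uniquely at $\vp^* := \vec 1 - \vu$; note $\vp^* \in \cP$ because $\vu \in [-1, 1]^m$.

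Next I would invoke the standard fact that in a zero-sum game of value $V$, any $\eps$-Nash strategy for the minimizing player is $2\eps$-optimal in the minmax sense. Concretely, if $(\vp, \vx)$ is an $\eps$-Nash equilibrium, combining the two best-response inequalities with the minimax theorem yields $V - \eps \le \ip{\vu + \vp, \vx} \le V + \eps$, so $\max_{\vx' \in \Delta(m)} \ip{\vu + \vp, \vx'} \le V + 2\eps = 1 + 2\eps$. Writing $\vp = \vp^* + \vz$, so that $u_i + p_i = 1 + z_i$, this simplifies to $\max_i z_i \le 2\eps$.

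Finally, since both $\vp$ and $\vp^*$ lie in $\cP$, we have $\ip{\vec 1, \vz} = 0$, so $\sum_{i : z_i \ge 0} z_i = \sum_{i : z_i < 0} (-z_i)$, which gives
\[
\norm{\vz}_1 \;=\; 2 \sum_{i : z_i \ge 0} z_i \;\le\; 2 m \cdot \max_i z_i \;\le\; 4 m \eps,
\]
as desired. The only step requiring any real care is the passage from the pairwise $\eps$-NE condition to the ``$2\eps$-minmax'' condition on $\vp$ alone; this is a textbook consequence of the minimax theorem, so I do not anticipate any hidden obstacle elsewhere in the argument. Everything else is a short calculation that exploits the fact that the box-and-hyperplane structure of $\cP$ makes the flattening $\vp^*$ uniquely optimal.
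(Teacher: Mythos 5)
Your proof is correct and follows essentially the same route as the paper's: both identify $\vp^* = \vec 1 - \vu$ as the principal strategy guaranteeing agent value exactly $1$, use the two $\eps$-best-response conditions to conclude $\max_a (\vu+\vp)[a] \le 1 + 2\eps$, and convert this to the $\ell_1$ bound via $\ip{\vec 1, \vp + \vu - \vec 1} = 0$. The paper phrases the last step as a contradiction and skips the explicit appeal to the minimax theorem (it only needs the exhibited deviation $\vp^*$), but the content is identical.
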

\begin{proof}
    Setting $\vp = \vec 1 - \vec u$ guarantees $\ip*{\vu + \vp, \vx} = \ip{\vec 1, \vx} = 1$ for every $\vx \in \Delta(m)$. Thus, in every $\eps$-Nash equilibrium, the agent's utility is at most $1+\eps$. Now suppose for contradiction that $(\vp, \vx)$ is an $\eps$-Nash equilibrium with $\norm*{\vp + \vu - \vec 1}_1 > 4m\eps$. Then since $\ip*{\vp + \vu - \vec 1, \vec 1} = 0$ by construction, there must be an action $a$ for which $(\vp + \vu - \vec 1)[a] > 2\eps$, \ie, $(\vu + \vp)[a] > 1 + 2\eps$. But then the agent has an $\eps$-profitable deviation to action $a$.
\end{proof}
It is well known that no-regret learning algorithms converge on average to Nash equilibria in zero-sum games. In particular, if both principal and agent run no-regret algorithms, and $R_\principal$ is the regret after $T$ timesteps for the principal, then the average principal strategy $\frac{1}{T} \sum_{t=1}^T \vp^t$ is an $\eps$-Nash equilibrium for $\eps \lesssim (R_\principal + C\sqrt{T})/T $. %
Here, we use the projected gradient descent algorithm for the principal. Note that, although the principal's utility function $\vp \mapsto - \ip*{\vu + \vp, \vx}$ depends on $\vu$ (which the principal does not know), the gradient of the principal's utility function is $-\vx$, which does not depend on $\vu$. Thus, the principal can run projected gradient descent without the knowledge of $\vu$. The resulting algorithm is formalized in \Cref{alg:1p no regret}.
\begin{algorithm}[!h]
\begin{algorithmic}[1]
\State $\vp^1 \gets \vec 1$
\For{each time $t = 1, \dots, T$}
\State principal selects payment vector $\vp^t \in \cP$, observes strategy $\vx^t$ played by the agent
\State principal sets $\vp^{t+1} \gets \Pi_\cP\qty[\vp^t - \eta \vx^t]$ \quad \Comment{$\eta = \sqrt{m/T}$ is the step size}
\EndFor
\State\Return $-\frac{1}{T}\sum_{t=1}^T \vp^t$
\end{algorithmic}
\caption{Principal's learning algorithm for a single no-regret agent}\label{alg:1p no regret}
\end{algorithm}
\begin{theorem}\label{th:1p no regret}
\Cref{alg:1p no regret} $\eps$-learns any single-agent game $\Gamma$ in $T = \cO(\frac{m^3 + C^2m^2}{\eps^2})$ rounds.
\end{theorem}
\begin{proof}
Let $\bar\vp = \frac{1}{T} \sum_{t=1}^T \vp^t$ be the average payment. From the preliminaries, the regret bound of the principal is given by $R_\principal \le BG \sqrt{T}$ where $B \lesssim \sqrt{m} $ and $G = 1$.
By the previous paragraph, the average play between the principal and the agent is an $\frac{R_0 + C\sqrt{T}}{T}$-Nash equilibrium. 
Then, by \Cref{lem:zerosum equilibrium}, %
we have
\begin{align}
    \eps = \norm{\bar \vp + \vu - \vec 1}_\infty \le \norm{\bar \vp + \vu - \vec 1}_1 \lesssim 4m \tfrac{R_\principal + C\sqrt{T}}{T} \lesssim \tfrac{m}{\sqrt{T}} (C + \sqrt{m}). 
\end{align}
Solving for $T$ yields the desired result.
\end{proof}

The zero-sum-game idea of Algorithm \ref{alg:1p no regret} is surprisingly simple and powerful.  The principal moves the payment vector in the opposite direction of the agent's mixed strategy every round, and the negative average payment vector ultimately becomes an accurate estimate of the agent's utility vector. This idea works for any no-regret learning algorithm of the agent. %

Another possible method to overcome history-dependency is to use signals: whenever the payment vector is changed, send a new signal to the agent to disentangle from the history. Signaling allows us to implement a binary-search algorithm to learn the game. However, that would require $m \log(1/\eps)$ signals, one for each step of the binary search, and the total number of rounds would be at least $C^2 m /\eps^2 \cdot \log(1/\eps)$, whereas our \Cref{alg:1p no regret} achieves the better dependence of $1/\eps^2$, saving a logarithmic term, without using signals.

\subsection{The multi-agent case}

We then consider the multi-agent case.  Our algorithm for the multi-agent case will combine the single-agent algorithm with signaling.  
Intuitively, our algorithm uses signals to induce the action profile $\va_{-i}$ among other agents without increasing their regret by too much. More precisely, we set the signal set as $S_i := A_i \cup \{ \bot \}$ where $\bot$ is a special signal indicating that $i$'s utility is the one being learned at the moment. For every action profile $\va_{-i} \in A_{-i}$, we send signal $\bot$ to agent $i$ and the desired action $a_j$ for each agent $j \ne i$ to learn $U_i(\cdot, \va_{-i})$. This idea is formalized in \Cref{alg:mp no regret}.

\begin{algorithm}[!h]
\begin{algorithmic}[1]
\State $t \gets 1$
\For{every agent $i = 1, \dots, n$}
    \For{every action profile $\bar \va_{-i} \in A_{-i}$}
        \State $\vp^1 \gets \vec 1 \in \R^{A_i}$
        \For{timestep $\ell = 1, \dots, L$}
            \State principal sets $P_i^t(\cdot) = \vp^\ell[\cdot]$ and $P_j^t(a_j) = 2 \ind*{a_j = \bar a_j}$ for every $j \ne i$
            \State principal sends signals $s_i^t = \bot$ and $s_j^t = \bar a_j$ for every $j \ne i$
            \State principal observes profile $\vx^t$ played by agents
            \State principal sets $\vp^{\ell+1} \gets \Pi_\cP \qty[ \vp^\ell - \eta \vx_i^t]$ \quad \Comment{$\eta = \sqrt{m_i/L}$ is the step size}
            \State $t \gets t+1$
        \EndFor
        \State $\tilde U_i(\cdot, \bar \va_{-i}) \gets -\frac{1}{L} \sum_{\ell=1}^L \vp^\ell$
    \EndFor
\EndFor
\State \Return $\tilde U$ 
\end{algorithmic}
\caption{Principal's learning algorithm for multiple no-regret agents}\label{alg:mp no regret}
\end{algorithm}

\begin{theorem}\label{th:mp no regret}
For some choice of parameter $L$, 
\Cref{alg:mp no regret} $\eps$-learns any game in $\frac{\poly(M, C)}{\eps^2}$ rounds.
\end{theorem}

\begin{proof}[Proof Sketch]
Since the principal always gives a large reward to agent who obey signals other than $\bot$, agent will almost always obey such signals. Thus, agents other than agent $i$ will almost always play profile $\va_{-i}$. This allows the principal to learn $U_i(\cdot, \bar\va_{-i})$ using the one-player algorithm from \Cref{th:1p no regret}. The formal proof is deferred to \Cref{app:no regret}.
\end{proof}

Signals are vital to this analysis. Without them, it would be possible for players to incur large {\em negative} regret, which harms the learning process because it allows the players to ``delay'' the learning until their regrets once again become non-negative. For example, if we were to execute our algorithm without signals, then by the time $\underline T_n(0)$ at which the outer loop reaches agent $n$, agent $n$ could have $\Omega(\overline T_n(0))$ regret for {\em every} action, making it impossible to say anything about how agent $n$ will act for the next $\Omega(\overline T_n(0))$ rounds. Using signals allows us to separate out the regret of agent $n$ in previous rounds from the regret of agent $n$ when its own utility function is being learned.

\subsection{Lower bound}
We now turn to lower bounds. In particular, we show a lower bound that matches \Cref{th:mp no regret} up to the exponent on $M$.
\begin{theorem}\label{th:no regret lower bound}
In the no-regret model, any algorithm that $\eps$-learns a game must take at least $\max\{\tilde \Omega(nM) \cdot \log\frac{1}{\eps},~ \frac{C^2}{4\eps^2}\}$ rounds. 
\end{theorem}

\begin{proof}[Proof sketch]
If every agent plays a pure action at each round, then the principal can only observe $\log(M)$ bits of information at each round. Learning the game requires $ \Omega(n M) \cdot \log\frac{1}{\eps}$ bits of information, so we need at least $\tilde \Omega(nM) \cdot \log\frac{1}{\eps}$ rounds in total.  On the other hand, to $\eps$-learn the game from agents' behavior, the agents' time-average regret $\frac{C}{\sqrt{T}}$ must be smaller than $2\eps$, so $T$ is at least $\frac{C^2}{4\eps^2}$. The full proof is in Appendix \ref{proof:no regret lower bound}. 
\end{proof}

\Cref{th:no regret lower bound} shows that it is impossible to exponentially improve the dependence on any of the parameters in \Cref{th:1p no regret}. For example, it implies that there can be no algorithm taking $C^2/\eps^2 \cdot M^{1-\Omega(1)}$ rounds, because that would contradict the lower bound for constant $\eps$ and suffficiently large $M$. We leave it as an interesting open question to close the polynomial gaps between the lower and upper bounds presented here.

\section{Steering No-Regret Learners by Learning the Game}\label{sec:steering}

A main motivating application of our result is the problem of {\em steering} no-regret learners to desirable outcomes, introduced by \citet{Zhang24:Steering} who assume that the principal knows the game. 
In this section, we explore the steering problem with unknown agent utilities. 

\subsection{Correlated signals and Payments}

In this section, we make two modifications to our model in Section \ref{sec:main-problem}: (1) we allow the signals to be {\em correlated}; (2) we allow payments to each agent $i$ to depend not only on agent $i$'s action, but also on the signals and actions of other players. These two assumptions are proven to be necessary for the steering problem by \citet{Zhang24:Steering}. 
Formally, each agent has a finite signal set $S_i$. As with actions, we will write $S = S_1 \times \dots \times S_n$ for the joint signal space. On each round $t$, the principal first commits to both a signal distribution $\mu^t \in \Delta(S)$ and a payment function $P_i^t : S \times A \to [0, B]$. The agents then select their strategies, which are functions $\phi_i^t : S_i \to \Delta(A_i)$. Then, the principal draws the joint signal $\vs^t = (s_1^t, \ldots, s_n^t) \sim \mu^t$, and each agent plays $\vx^t_i = \phi^t_i(s^t_i)$. As before, we assume that agents have no regret for each signal: for every signal $s_i \in S_i$, %
\begin{align}
    \hat R_i(t, s_i) := \max_{a_i \in A_i} \sum_{\tau \le t} \sum_{\vs_{-i} \in S_{-i}} \mu^t( \vs ) \Big[ U_i^\tau(\vs, a_i, \phi_{-i}^\tau (\vs_{-i})) - U_i^\tau(\vs, \phi^\tau(\vs)) \Big] \le C \sqrt{T}.
\end{align}
where now $U_i^\tau(\vs, \va) := U_i(\va) + P_i^\tau(\vs, \va)$. 

We remark that this correlated model gives strictly more power to the principal than the previous model: if we restrict the principal to setting $\mu^t$ to be a deterministic distribution and $P_i^t$ to be dependent on agent $i$'s action $a_i$ only, the two models coincide. Therefore, the previous positive results, particularly \Cref{th:mp no regret}, apply to this model as well.

The reason for the difference in the models is that the correlated signaling model makes clear in what formal sense the signals are {\em private}: the agents' strategies $\phi_i^t$ can only depend on $s_i$, not other agents' signals. This will allow us to steer to {\em correlated} equilibria.

\subsection{What outcome should we steer to?}

The steering problem, as defined by \citet{Zhang24:Steering}, stipulates for their main results that the principal knows in advance, or be able to compute, the desired outcome that we wish to induce. Of course, in our setting, such a stipulation is unreasonable: the principal does not initially know the agents' utilities in the game $\Gamma$, so it cannot know what outcome it wishes to induce. We thus take a more direct approach: we try to maximize the average reward, less payments, of the principal. That is, we will assume that the principal has a utility function $U_\principal : A \to \R$, and we will attempt to optimize the principal's objective, defined as the principal's utility minus payments:
\begin{align}
    F(T) := \tfrac{1}{T} \sum \nolimits_{t=1}^T \E_{{\vs^t \sim \mu^t}}\Big[U_\principal(\phi^t(\vs^t)) - \sum \nolimits_{i=1}^n P_i^t(\vs^t, \phi^t(\vs^t)) \Big]. 
\end{align}
To analyze the above objective, we introduce a solution concept called {\em correlated equilibrium with payments} (CEP). %
A CEP is a correlated distribution of signals and payment functions that satisfies the usual incentive compatibility constraints.
Formally, we have the following definition.%
\begin{definition}\label{def:cep}
    A {\em correlated profile with payments} is a pair $(\mu, P) \in \Delta(A) \times [0, B]^{[n] \times A \times A}$.\footnote{Here, we let $S_i = A_i$. This is WLOG due to a {\em revelation principle} argument (\Cref{sec:revelation}). } The vector $P$ consists of $n$ payment functions $P_i : A \times A \to [0, B]$, where $P_i(\vs, \va)$ is the payment to agent $i$ given joint signal $\vs \in A$ and joint action $\va \in A$. Given $(\mu, P)$, the {\em objective value} for the principal is defined as
    \begin{align} \label{eq:CEP-objective}
        F(\mu, P) := \E_{\va \sim \mu} \Big[U_\principal(\va) - \sum\nolimits_{i=1}^n P_i(\va, \va) \Big],
    \end{align}
    An $\eps$-{\em correlated equilibrium with payments} ($\eps$-CEP) is a pair $(\mu, P)$ satisfying the incentive compatibility (IC) constraints: for every agent $i \in [n]$ and deviation function $\phi_i : A_i \to A_i$,
    \begin{align} \label{eq:CEP-constraint}
        \E_{\va\sim\mu} \qty[ U_i^P(\va, \phi_i(a_i), \va_{-i}) - U_i^P(\va, \va)] \le \eps,
    \end{align}
    where $U_i^P(\vs, \va) := U_i(\va) + P_i(\vs, \va)$. %
    An $0$-CEP is called a CEP. 
\end{definition}

Let $F^*(\Gamma)$ be the principal's objective value under an \emph{optimal} CEP of game $\Gamma$: 
\[ F^*(\Gamma) ~ = ~ \max_{(\mu, P): \text{ a CEP of game $\Gamma$}} F(\mu, P). \]
We show that $F^*(\Gamma)$ is an upper bound on the maximum value attainable by a principal in our learning model. %
Relatedly, \citet{Deng19:Strategizing, lin2025generalized} show that a principal cannot achieve more than the Stackelberg equilibrium objective against a single no-regret agent in games with no payment.
Our result generalizes to multiple no-regret agents and games with payment.
The proof of Theorem \ref{thm:CEP-upper-bound} is in Appendix \ref{proof:CEP-upper-bound}.

\begin{theorem} \label{thm:CEP-upper-bound}
    Let $\Gamma$ be any game, and suppose the signal sets have size $|S_i| \le \poly(m)$. Then there exist uncoupled no-regret learning algorithms for the agents such that, for any principal algorithm, the principal's objective value $F(T)$ is bounded above by $F^*(\Gamma) + o(1/\sqrt{T})$. %
\end{theorem}

\subsection{Steering to Optimal CEP}

We now show that the principal {\em can} achieve the optimal CEP objective $F^*(\Gamma)$ in the limit $T \to \infty$. The algorithm (Algorithm \ref{alg:steer}) works in two stages. In the first stage, the principal uses \Cref{alg:mp no regret} to learn the utility functions of the agents. Then, the principal computes an optimal CEP and steers the agents to it.
The steering algorithm is adapted from \citet{Zhang24:Steering}, and presented in full here for the sake of self-containment.
Notably, since the principal learns the game up to an error $\eps > 0$, it must give extra payments of at least $2\eps$ to ensure that agents do not deviate.
\Cref{th:steering} shows that the principal can learn to steer agents to achieve the optimal objective $F^*(\Gamma)$ at a rate of $\poly(M, C) /T^{1/4}$. The proof is given in Appendix \ref{proof:steering}. 

\begin{algorithm}[!h]
\begin{algorithmic}[1]
\State using \Cref{alg:mp no regret}, estimate the utility functions to precision $\eps$
\State %
compute an optimal CEP $(\tilde \mu^*, \tilde P^*)$ of the estimated game $\tilde\Gamma$
\For{remaining rounds} \\
   \quad set  $\mu^t = \mu^*$ and $P^t_i(\vs, \va) = \begin{cases}
        \tilde P_i^*(\va, \va) + 2\eps + \rho &\qif \vs = \va \\
        2 &\qif  s_i=a_i, \vs_{-i} \ne \va_{-i} \\
        0 &\qq{otherwise}
    \end{cases}$.
\EndFor
\end{algorithmic}
\caption{Principal's algorithm for steering without prior knowledge of utilities}\label{alg:steer}
\end{algorithm}
\begin{theorem}\label{th:steering}
    For appropriate choices of parameters $L$ (from \Cref{alg:mp no regret}) and $\rho$, \Cref{alg:steer} guarantees principal objective $F(T) \ge F^*(\Gamma) - \poly(M, C) /T^{1/4}$ on average in $T$ rounds.
\end{theorem}

\section{Conclusions and Future Directions}

We showed that a principal can efficiently learn the utility functions of agents in games through payments and signals, and applied our algorithms to achieve optimal steering without prior knowledge of the game. 
We gave upper and lower bounds for both problems. %
Our results apply to arbitrary no-regret agents. 
We leave a few directions for future research:
\begin{itemize}[topsep=0pt, leftmargin=12pt]
\item We did not optimize the polynomial dependencies on $M$, so our upper and lower bounds are off by $\poly(M)$ factors. We leave it as an interesting open problem to close these gaps.
\item Our techniques are specialized to normal-form games, and require, for example, that the principal observe the strategy $\vx^t$ of the agents at every timestep. This may no longer be a reasonable assumption in, \eg, {\em extensive-form games}, where one may wish instead to assume that we only observe {\em on-path} agent actions. We leave it as future work to extend our results to such settings.
\item While our single-agent utility-learning algorithm only uses payments, our mutli-agent algorithm additionally uses signals. 
The limit of learning agents' utility functions by payments only, without using signals, is worth exploring.  
\item Similarly worth exploring is the problem of steering without utility learning. We proved that agents' utility functions require $\tilde\Omega(nM) \log \frac{1}{\eps}$ rounds to learn. 
Can we avoid this bottleneck by steering agents to desirable outcomes without learning their entire utility functions? 
\end{itemize}

\subsubsection*{Acknowledgements}
B.H.Z.~was supported by the CMU Computer Science Department Hans Berliner PhD Student Fellowship.
T.L.~was supported by a Siebel Scholarship.
Y.C.~is supported by National Science Foundation grant IIS-2147187 and by Amazon. 
T.S.~is supported by the Vannevar Bush Faculty Fellowship ONR N00014-23-1-2876, National Science Foundation grants RI-2312342 and RI-1901403, ARO award W911NF2210266, and NIH award A240108S001.

\bibliography{bibfile}
\bibliographystyle{iclr2026_conference}

\newpage
\appendix

\section{Minimizing Payment}
\label{app:payment}

Most of our paper focuses on $\eps$-learning a game in as few rounds as possible. This section discusses an alternative goal: $\eps$-learning a game while minimizing the total payment to the agents.  %
We show that the minimal achievable payment is upper-bounded and lower-bounded by the number of rounds up to constant factors, so the payment minimization problem is quantitatively similar to round complexity minimization.

Formally, we define the {\em payment complexity} $\mathrm{PC}(n, \eps)$ to be the minimal total payment required to $\eps$-learn a game with $n$ agents (in the worst case over all games), and the {\em round complexity} $\mathrm{RC}(n, \eps)$ to be the minimal number of rounds to do so. Then, we have:
\begin{proposition}\label{th:mp-payment}
There exist games and no-regret agents such that 
\[ \Omega\big(\mathrm{RC}(n-1, \eps) \big) \le \mathrm{PC}(n, \eps) \le \cO\big( n \cdot \mathrm{RC}(n, \eps) \big). \]
\end{proposition}
\begin{proof}
The inequality $\mathrm{PC}(n, \eps) \le \cO\big( n\cdot \mathrm{RC}(n, \eps) \big)$ is straightforward because the payment to each agent is bounded by $O(1)$ at each round. 

To prove $\Omega\big(\mathrm{RC}(n-1, \eps) \big) \le \mathrm{PC}(n, \eps)$, we reduce the utility learning problem with $n-1$ agents to the problem with $n$ agents. Let $\Gamma_{n-1}$ be an $(n-1)$-agent game with utility functions $U_1, \ldots, U_{n-1}$.  Consider an $n$-agent game $\Gamma_n$ where the $n$-th agent has two actions $A_n=\{0, 1\}$.  If the $n$-th agent takes $1$, then the first $n-1$ agents all obtain utility $0$ regardless of their actions; if the $n$-th agent takes $0$, then the first $n-1$ agents have the same utility functions as in game $\Gamma_{n-1}$. Further assume that the $n$-th agent's utility depends on his own action only, and in particular, $U_n(a_n = 0) = 0$ and $U_n(a_n = 1) = 1$, so the $n$-th agent takes action $1$ by default.
Intuitively, in order to learn the utility functions of game $\Gamma_n$, we have to incentivize the $n$-th agent to play action $0$ by paying him $1$ at each round, so that we can learn the first $n-1$ agents' utility functions. This means that the total payment is lower bounded by the number of rounds to learn the $(n-1)$-agent game. 

Formally, let ALG be an algorithm for learning $\Gamma_n$ with payment complexity $\mathrm{PC}(n, \eps)$.  We use ALG to construct an algorithm to learn $\Gamma_{n-1}$ as follows: 
\begin{itemize}
    \item At each round $t$, do: 
    \begin{itemize}
        \item Obtain payment functions $P_1^t, \ldots, P_{n-1}^t, P_n^t$ from ALG.
        \item If $P_n^t(a_n = 0) < P_n^t(a_n=1) + 1$, then let $a^t_n = 1$ and $a^t_i = \argmax_{a_i \in A_i} P_i^t(a_i)$ for $i\in\{1, \ldots, n-1\}$.  Return the action profile $(a^t_{-n}, a^t_n)$ to ALG. %
        \item If $P_n^t(a_n = 0) \ge P_n^t(a_n=1) + 1$, then let $a^t_n = 0$ and send the payment functions $P_1^t, \ldots, P_{n-1}^t$ to the first $n-1$ agents.  Observe their actions $a^t_{-n}$ in game $\Gamma_{n-1}$.  Return $(a^t_{-n}, a^t_n)$ to ALG.
    \end{itemize}

    \item Obtain utility functions $\tilde U_1, \ldots, \tilde U_n$ from ALG.  Output $\tilde U_1(\cdot, a_n=0), \ldots, \tilde U_{n-1}(\cdot, a_n=0)$. 
\end{itemize}
By definition, if ALG outputs $\tilde U_1, \ldots, \tilde U_n$ that are $\eps$-close to the utility functions of $\Gamma_n$, then the outputs $\tilde U_1(\cdot, a_n=0), \ldots, \tilde U_{n-1}(\cdot, a_n=0)$ are $\eps$-close to the utility functions $U_1, \ldots, U_{n-1}$ of $\Gamma_{n-1}$.  Each time ``$P^t_n(a_n = 0) \ge P^t_n(a_n=1) + 1$'' happens, agent $n$ takes action $a^t_n = 0$, we pay at least $1$, and we interact with the first $n-1$ agents once. So, the total payment is at least $\mathrm{PC}(n, \eps) \ge \mathrm{RC}(n-1, \eps)$.
\end{proof}

\section{Details omitted from \Cref{sec:main-problem}}

\subsection{Anytime No Regret}
\label{app:anytime}
Our condition on no-regret learning is that, for every signal $s_i$ (here omitted as a superscript for notational clarity), the regret $R_i(t)$ is bounded by $C \sqrt{T}$ for {\em every} timestep $t \le T$, not just at $t = T$ as is conventionally required by adversarial no-regret algorithms. This is not a significantly stronger requirement:
\begin{proposition}\label{prop:anytime}
    If a (possibly randomized) adversarial no-regret algorithm satisfies $R_i(T) \le B$ with probability $1-\delta$ against any adversary, then with probability $1-\delta$ it also satisfies $R_i(t) \le B$ simultaneously for all $t \le T$ against any adversary.
\end{proposition}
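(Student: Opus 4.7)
The plan is to prove the contrapositive by a standard \emph{stopping-time} reduction: any adversary that exploits a late-round failure of the anytime bound can be converted into an adversary that forces the final-round bound to fail as well, with the same probability.

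More concretely, suppose for contradiction that there is an adversary $\mathcal A$ against which, with probability strictly greater than $\delta$, there exists some $t \le T$ with $R_i(t) > B$. I will construct a new adversary $\mathcal A'$ as follows. Let $\tau^\star$ denote the first round (if any) at which the regret $R_i(\tau^\star) > B$; since $R_i(\tau)$ is determined by the history $(\vx^1,\vu^1,\dots,\vx^\tau,\vu^\tau)$, the stopping time $\tau^\star$ is observable by the adversary. For every round $\tau \le \tau^\star$, let $\mathcal A'$ play exactly what $\mathcal A$ would have played on that history; for $\tau > \tau^\star$, let $\mathcal A'$ play $\vu^\tau = \vec 0$, which is trivially an admissible utility vector under the boundedness convention $|\ip{\vu^\tau,\vx}| \lesssim 1$.

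Coupling the two executions by using the same algorithmic randomness, the runs of the learner under $\mathcal A$ and $\mathcal A'$ coincide up through round $\tau^\star$. Then, by the definition of $R_i(T)$ as a maximum of a sum and the fact that appending $\vec 0$'s to the summand sequence does not change that sum,
\begin{align*}
    R_i(T) \;=\; \max_{\vx \in \cX} \sum_{\tau=1}^{T} \ip{\vu^\tau,\vx - \vx^\tau} \;=\; \max_{\vx \in \cX} \sum_{\tau=1}^{\tau^\star} \ip{\vu^\tau,\vx - \vx^\tau} \;=\; R_i(\tau^\star) \;>\; B
\end{align*}
whenever $\tau^\star$ exists, which occurs with probability greater than $\delta$ under $\mathcal A'$ by the coupling. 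This contradicts the hypothesized fixed-horizon bound applied to adversary $\mathcal A'$, completing the proof.

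I do not expect any real obstacle here: the only subtleties are (i) verifying that $\mathcal A'$ is a legitimate (history-measurable) adversary, which follows because $R_i(\tau)$ depends only on past plays, and (ii) checking that switching to $\vec 0$ utility vectors preserves admissibility and leaves the regret sum unchanged, both of which are immediate. The same argument goes through verbatim for randomized algorithms and for any convex compact $\cX$, so no special property of the no-regret algorithm beyond the stated high-probability final-round guarantee is needed.
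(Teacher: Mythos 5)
Your proposal is correct and follows essentially the same argument as the paper: construct a modified adversary that copies $\cA$ until the first round at which the regret exceeds $B$ and plays $\vec 0$ thereafter, so that the final-round regret inherits the violation. The paper's version is terser but identical in substance; your added remarks on measurability of the stopping time and coupling of the algorithmic randomness are the right details to check and introduce no new issues.
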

\begin{proof}
    Suppose not, \ie, suppose that there is some adversary $\cA$ such that, with probability $> \delta$, there exists some $t \le T$ for which $R_i(t) > B$. Then consider the adversary $\cA'$ that acts as follows. At every time $t$, if $R_i(t-1) \le B$, it copies $\cA$. Otherwise, it outputs $\vu^{t} = \vec 0$ for all remaining timesteps. In the latter case, which by definition occurs with probability $>\delta$, adversary $\cA'$ will also achieve $R_i(T) > B$.
\end{proof}

\section{Details omitted from \Cref{sec:no regret}}\label{app:no regret}

\subsection{Proof of \Cref{th:mp no regret}}

As in \Cref{th:1p no regret}, we will assume without loss of generality that $\sum_{a_i \in A_i} U_i(a_i, \va_{-i}) = 0$ for all agents $i$ and opponent profiles $\va_{-i}$.

We first claim that, for any agent $i$ and any given signal $s_i^t = a_i \in A_i$, the total probability mass that $i$ plays actions {\em other than} $a_i$ is bounded by $C \sqrt{T}$. To see this, note that
whenever the principal sends signal $a_i$, the payment is always set such that $U^t_i(a_i, \va_{-i}) \ge 1 + U^t_i(a_i', \va_{-i})$. Thus, the number of times $i$ does not play $a_i^t = a_i$ quantity lower-bounds the regret $\hat R(T, a_i)$. The claim follows from the regret guarantee $\hat R(T, a_i) \le C \sqrt{T}$. 

We will refer to the iterations of the inner loop over action profiles $\va_{-i}$ as {\em phases}. Fix an agent $i$, and number the phases for that agent using integers $k \in\{1, \cdots, M_i = \prod_{j \ne i} m_j\}$, corresponding to strategy profiles $\bar \va_{-i}^1, \dots, \bar \va_{-i}^{M_i} \in A_{-i}$. Let $\cT_i(k) = \{ \underline T_i(k), \dots, \overline T_i(k) \}$ be the set of timesteps in agent $i$'s $k$th phase.
The length of each phase is $|\cT_i(k)| = L$. 
Let $B_K$ be the total probability mass placed by all agents $j \ne i$ on strategy profiles other than $\bar \va^k_{-i}$ throughout phases $1, \dots, K$. By the previous claim, we have 
\[
B_K \,:=\, \sum_{k \le K,  t \in \cT_i(k)} \Big( 1 - \prod_{j \ne i} \vx^t_j(\bar a^k_j) \Big) \,\le\, \sum_{k \le K,  t \in \cT_i(k), j \ne i} \qty(1 - \vx_j^t(\bar a_j^k)) \,\le\, nm C \sqrt{T}.  
\]
By the principal's regret bound in each phase, we must have
\begin{align}
    \sum_{t \in \cT_i(k)} U_i^t(\vx_i^t, \bar \va_{-i}^k) &= \sum_{t \in \cT_i(k)} U_i(\vx_i^t, \bar \va_{-i}^k) + \sum_{t \in \cT_i(k)} P_i^t(\vx_i^t) 
    \\&\le \sum_{t \in \cT_i(k)} U_i(\vx_i^t, \bar \va_{-i}^k) + \sum_{t \in \cT_i(k)} \qty[ 1 -  U_i(\vx_i^t, \bar \va_{-i}^k) ] + R_0
    \\&= L + R_0
\end{align}
where the inequality follows from the facts that 1) the principal's regret is bounded, 2) $P_i^t(\cdot) = 1-U_i(\cdot, \bar \va_{-i}^k)$ is a valid unilateral deviation for the principal. 

Fix some $K \le M_i$ and $a_i \in A_i$. By the anytime regret bound of agent $i$ under signal $\bot$, we have
\begin{align}
    \sum_{k \le K, t \in \cT_i(k)} U_i^t(a_i, \vx_{-i}^t) 
    &\le \sum_{k \le K, t \in \cT_i(k)} U_i^t(\vx_i^t, \vx_{-i}^t) + \hat R_i(T_i(k), \bot)
    \\&\le 2 B_K + \sum_{k \le K, t \in \cT_i(k)} U_i^t(\vx_i^t, \bar \va_{-i}^k)  + \hat R_i(T_i(k), \bot) 
    \\&\le 2nm C \sqrt{T} + K(L + R_\principal) + n C \sqrt{T}.  
\end{align}
Moving $KL$ to the left and writing $U_i^t(a_i, \vx_{-i}^t)$ as $U_i(a_i, \vx_{-i}^t) + P_i^t(a_i)$, 
\begin{align}
    \sum_{k=1}^K \underbrace{ \frac{1}{L}\sum_{t\in \cT_i(k)} [U_i(a_i, \vx_{-i}^t) + P_i^t(a_i) - 1]}_{\text{denoted by } \eps_i(k, a_i)} &\le \frac{1}{L} \qty(R_\principal K + 3nmC \sqrt{T}).
\end{align}
The error we need to bound is $\norm{\eps_i(k, \cdot)}_\infty$. 
Since the above inequality holds for any $a_i$, and $\sum_{a_i} \eps_i(k, \cdot) = 0$ by definition, it follows that
\begin{align}
    \norm{\sum_{k=1}^K \eps_i(k, \cdot)}_\infty =  \max_{a_i\in A_i} \bigg| \sum_{k=1}^K \frac{1}{L}\sum_{t\in \cT_i(k)} [U_i(a_i, \vx_{-i}^t) + P_i^t(a_i) - 1] \bigg| \le \frac{m}{L}\qty(R_\principal K + 3nmC \sqrt{T}).
\end{align}
By triangle inequality, we have
\begin{align}
    \norm{\eps_i(k, \cdot)}_\infty = \norm{ \sum_{k'=1}^k \eps_i(k', \cdot) - \sum_{k'=1}^{k-1} \eps_i(k', \cdot)}_\infty  \le \frac{2m}{L}\qty(R_\principal M + 3nmC \sqrt{T}).
\end{align}
Finally, substituting $R_\principal \lesssim \sqrt{mL}$ and $T\le nML$, we arrive at
\begin{align}
    \norm{\eps_i(k, \cdot)}_\infty \lesssim \frac{1}{\sqrt{L}}\qty(m^{3/2} M + n^{3/2}m^2C M^{1/2}). 
\end{align}
Taking $L = \cO( \frac{m^3 M^2 + n^3 m^4 M C^2}{\eps^2})$ completes the proof, with $T \le nML = \cO(\frac{nm^3 M^3 + n^4 m^4 M^2 C^2}{\eps^2})$. 
\qedhere

\subsection{Proof of Theorem \ref{th:no regret lower bound}}
\label{proof:no regret lower bound}
We prove both terms in the max separately. For the first term, suppose that $U_i(1, \cdot) = 0$ for all agents $i$, and $U_i(a_i, a_{-i}) \sim \{ 0, 2\eps, 4\eps, \dots, 1\}$ i.i.d. for $2 \le a \le m_i$ and $a_{-i} \in A_{-i}$. Thus the utility $U$ is uniformly sampled from a set of $\Omega(1/\eps)^K$ possible utilities, where $$K = \sum_{i=1}^n \qty((m_i-1) \prod_{j \ne i} m_j) \ge \frac{nM}{2}.$$
    Each utility function differs by $2\eps$, it follows that $\eps$-learning a game sampled from this family entails exactly outputting the utility $U$. Suppose that, as discussed in \Cref{sec:model remarks}, the no-regret algorithms always output pure strategies $a_i^t \in A_i$. 
    Then on each round, the principal only observes a single action profile $a \in A$, which only conveys $\log M$ bits of information. Therefore, $\eps$-learning the game takes at least 
    \begin{equation}
        \frac{K \log(\Omega(1/\eps))}{\log M} \gtrsim \frac{nM \log(1/\eps)}{\log M}
    \end{equation}
    rounds, as desired.

For the second term, suppose $\eps \le \frac{C}{2\sqrt{T}}$. 
Let $Z_i : A \to [0, \eps]$ be any function, and suppose that every agent plays according to utility function $U_i + Z_i$ instead of $U_i$ using an algorithm with $\frac{C}{2}\sqrt{T}$ regret. Such an agent incurs at most $\frac{C}{2}\sqrt{T} + \eps T \le C\sqrt{T}$ regret with respect to $U_i$. %
Such an agent is completely indistinguishable from an agent who has true utility $U_i + Z_i$ and runs an algorithm with regret $C\sqrt{T}$, and therefore the principal can never distinguish between these two possibilities. Since this is true for any $Z_i$, this means that the principal cannot learn $U_i$ to accuracy better than $\eps \le \frac{C}{2\sqrt{T}}$. Thus, $T$ must be at least $\frac{C^2}{4\eps^2}$ for the principal to $\eps$-learn the game.

\section{Details omitted from \Cref{sec:steering}}

\subsection{Revelation principle for CEPs}\label{sec:revelation}

In this section, we formulate a general version of the revelation principle for CEPs. 

\begin{definition}[Non-canonical CEP]
    A (non-canonical, agent-form) $\epsilon$-CEP is a distribution $\pi \in \Delta(S \times \cP \times A_1^{S_1} \times \dots \times A_n^{S_n})$, where $\cP = [0, B]^{[n] \times A \times A}$ is the set of payment functions, such that, for any player $i$ and any map $\psi_i : A_i \to \Delta(A_i)$, we have
    \begin{align}
        \E_{(\vs, P, \phi) \sim \pi} \qty[U_i^P(\vs, (\psi_i \circ \phi_i)(s_i), \phi_{-i}(\vs_{-i})) - U_i^P(\vs, \phi(\vs))] \le 0.
    \end{align}
    The objective value is given by 
    \begin{align}
        \E_{(\vs, P, \phi) \sim \pi}  \qty[ U_0(\phi(\vs)) - P(\vs, \phi(\vs))].
    \end{align}
\end{definition}
 We say that $\pi$ is {\em canonical} if the payment function $P\sim\pi$ is constant, and for every player $i$, $S_i = A_i$ and $\phi_i$ is the identity map. Note that canonical CEPs are precisely the CEPs according to \Cref{def:cep}. 

\begin{proposition}[Revelation principle for CEPs]\label{prop:revelation}
Every CEP is equivalent to a canonical CEP, in the sense that, for every CEP $\pi$, there is a canonical CEP $(\mu', P')$ achieving the same principal objective value.
\end{proposition}
\begin{proof}
    Given a CEP $\pi$, set $\mu' \in \Delta(A)$ to be the distribution that samples $(\vs, \phi)\sim\pi$ and then samples and outputs $\va\sim\phi(\vs)$. Then define $P'_i : A \times A \to [0, B]$ by $$P'_i(\va, \va') = \E_{(\vs, P)\sim \pi | \va} P_i(\vs, \va'),$$ where $(\vs, P)\sim\pi | \va$ denotes sampling $(\vs, P)$ with probability proportional to $\pi(\vs, P) \cdot \phi(\va | \vs)$. Then note that, for any $\psi_i : A_i \to A_i$, we have
    \begin{align}
       \E_{\va\sim\mu'}[U_i^{P'}(\va, \psi_i(a_i), \va_{-i})]  &= \E_{\va \sim \mu'} \qty[ U_i(\psi_i(a_i), \va_{-i}) + P_i'(\va, \psi_i(a_i), \va_{-i})  ]
       \\&= \E_{\substack{\va \sim \mu' \\ (\vs, P)\sim \pi | \va}} \qty[ U_i(\psi_i(a_i), \va_{-i}) + P_i(\vs, \psi_i(a_i), \va_{-i})  ]
       \\&= \E_{\substack{(\vs, P, \phi)\sim\mu \\ \va \sim \phi(\vs)}} \qty[ U_i(\psi_i(a_i), \va_{-i}) + P_i(\vs, \psi_i(a_i), \va_{-i})  ]
       \\&= \E_{(\vs, P, \phi) \sim \pi} \qty[U_i^P(\vs, (\psi_i \circ \phi_i)(\vs_i), \phi_{-i}(\vs_{-i}))]
    \end{align}
     Thus, if $\psi_i$ is a profitable deviation for agent $i$ in $(\mu', P')$, then it is also a profitable deviation in $\pi$. But the non-canonical CEP $\pi$ does not have profitable deviations, so $(\mu', P')$ is also a CEP.
\end{proof}

\subsection{Additional Properties of CEPs}
\label{app:additional-properties-CEP}
We present some additional properties of CEPs.

First, optimal CEPs can be efficiently computed when the game is known.
\begin{proposition}
\label{prop:optimal-CEP-LP}
Given a game $\Gamma$ of size $M$, an optimal CEP $(\mu^*, P^*)$ and its principal objective $F^*(\Gamma)$ can be computed in $\poly(M)$-time by a linear program.  
\end{proposition}
\begin{proof}
    Define change of variables
    \begin{align}
        Q_i(a_i) := \mu(a_i) \cdot \E_{\va\sim\mu|a_i} P_i(\va, \va).
    \end{align}
    That is, $Q_i(a_i)$ is the $\mu$-weighted total payment given to agent $i$ across all strategy profiles on which agent $i$ is recommended action $a_i$.  %
    Consider the following linear program: 
    \begin{equation}\label{eq:lp-optimal-ce-with-payments}
    \begin{aligned}
    \max_{\mu, Q_i, \eps_i} \quad \sum_{\va \in A} \mu(\va) U_\principal(\va) - \sum_{\substack{i \in [n] \\ a_i \in A_i }} Q_i(a_i) & \qq{s.t.}\\
    \sum_{\va_{-i} \in A_{-i}} \mu(\va) \qty[ U_i(a_i', \va_{-i}) - U_i(\va)] - Q_i(a_i) &\le \eps_i (a_i) \qq{} &&\forall i \in [n], a_i, a_i' \in A_i 
\\
    \sum_{a_i \in A_i} \eps_i(a_i) & \le 0 \qq{} &&\forall i \in [n] \\
    \sum_{\va \in A} \mu(\va) &= 1 \\
    0 \le Q_i(a_i) &\le \mu(a_i)&&\forall i \in [n], a_i, a_i' \in A_i. 
    \end{aligned}
    \end{equation}
    This LP is equivalent to computing the optimal $0$-CEP because for any feasible solution $(\mu, Q)$ of the LP, the payment functions defined by $P_i(\va, \va) = \frac{Q_i(a_i)}{\mu(a_i)}$ and $P_i(\va, \va') = 0$ if $\va'\ne \va$ together with $\mu$ constitute a feasible $0$-CEP with the same objective value. 
    This LP has $\poly(M)$ variables and constraints, so the proof is complete.
\end{proof}

Second, we show that the assumption that the payments can be signal-dependent is not innocuous, except when the payment at equilibrium is zero.
There exist games where a CEP with signal-dependent payment is strictly better than a CEP with signal-independent payment. 
\begin{proposition}[Correlation does not help when no payments are allowed in equilibrium] The $0$-CEPs with $\E_{\va\sim \mu} P(\va, \va) = 0$ are exactly the correlated equilibria.\label{th:0-cep equivalence}
\end{proposition}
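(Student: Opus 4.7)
The plan is to prove the two inclusions directly from the definitions, the only non-routine observation being that nonnegativity of $P$ together with $\E_{a\sim\mu} P(a,a) = 0$ forces $P_i(a,a) = 0$ for every $i$ and every $a$ in the support of $\mu$.

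For the forward direction, I would take an arbitrary correlated equilibrium $\mu$ and simply exhibit the witness $P \equiv 0$. Then $(\mu, 0)$ trivially satisfies $\E_{a\sim\mu} P(a,a) = 0$, and the $0$-CEP incentive constraints reduce to the standard correlated-equilibrium constraints
\[
    \E_{a\sim\mu}\bigl[U_i(\phi_i(a_i), a_{-i}) - U_i(a)\bigr] \le 0
\]
for every agent $i$ and every deviation $\phi_i : A_i \to A_i$, which hold by assumption.

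For the reverse direction, suppose $(\mu, P)$ is a $0$-CEP with $\E_{a\sim\mu} P(a,a) = 0$. Because $P \ge 0$ and $P(a,a) = \sum_i P_i(a,a)$, the expectation is a sum of nonnegative terms, so $P_i(a,a) = 0$ for all $i$ and all $a$ in the support of $\mu$. Plugging this into the $0$-CEP incentive constraint and using $P_i(a, \phi_i(a_i), a_{-i}) \ge 0$ gives
\[
    \E_{a\sim\mu}\bigl[U_i(\phi_i(a_i), a_{-i}) - U_i(a)\bigr]
    \le \E_{a\sim\mu}\bigl[U_i(\phi_i(a_i), a_{-i}) + P_i(a, \phi_i(a_i), a_{-i}) - U_i(a) - P_i(a,a)\bigr] \le 0,
\]
which is exactly the correlated-equilibrium condition for $\mu$.

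The only step requiring a moment of care is handling values of $a$ outside the support of $\mu$ when invoking $P_i(a,a) = 0$; since expectations under $\mu$ only see the support, this has no effect on the argument. I do not expect any technical obstacle here: both directions are short manipulations of the IC inequality, with the nonnegativity of off-equilibrium payments doing all the real work in the reverse direction.
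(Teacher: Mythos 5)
Your proof is correct. The paper proves this proposition differently: it appeals to the LP \eqref{eq:lp-optimal-ce-with-payments} and observes that the condition $\E_{a\sim\mu}P(a,a)=0$ forces the change-of-variables quantities $Q_i(\cdot)$ to vanish, at which point the LP collapses to the standard LP characterizing correlated equilibria. Your argument instead verifies both inclusions directly from \Cref{def:cep}. The two are mathematically equivalent, but yours is somewhat more self-contained: the LP only tracks on-path payments because the paper has already invoked the ``without loss of generality, $P(s,a)=0$ for $s\ne a$'' remark, whereas you explicitly handle possibly nonzero off-path payments by noting that $P_i(a,\phi_i(a_i),a_{-i})\ge 0$ only makes deviations more attractive, so dropping it preserves the inequality in the right direction. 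Your observation that nonnegativity plus $\E_{a\sim\mu}P(a,a)=0$ forces $P_i(a,a)=0$ on the support of $\mu$, and that points outside the support are irrelevant to the expectations, is exactly the care needed; there is no gap.
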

\begin{proof}
    In the LP \eqref{eq:lp-optimal-ce-with-payments}, this is equivalent to setting $Q_i(\cdot) = 0$ for every agent $i$, in which case \eqref{eq:lp-optimal-ce-with-payments} is just the LP characterizing correlated equilibria.
\end{proof}
However, when the payment at equilibrium is positive, it is possible for signal-dependent payments to help the principal.

\begin{proposition}[Signal-dependent payments can help in general]\label{th:cep signal dependence}
    There exists a game $\Gamma$, and principal utility function $U_\principal$, such that the optimal value of \eqref{eq:lp-optimal-ce-with-payments} is greater than the objective value of the optimal CEP in which $P(\vs, \va)$ depends on $\va$ but not $\vs$.
\end{proposition}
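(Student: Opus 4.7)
The plan is to prove the proposition by exhibiting an explicit small game and principal utility. The key observation driving the separation is that signal-dependent payments can, at each realized signal $s$, threaten the player with $P_i(s, a\ne s_i) = 0$ off-path, whereas action-only payments must use the same value $P_i(a)$ both as the on-path payment when $a$ is recommended and as the off-path payment when some other recommendation's deviation lands at $a$. For correlated $\mu$ where multiple signals profitably deviate to overlapping actions, this ``double duty'' of $P_i(a)$ will force strictly larger total payments than the signal-dependent case.

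Concretely, I will take the two-player two-action game with $u_1(a_1, a_2) = \ind{a_1 \ne a_2}$ (player 1 anti-coordinates) and $u_2 \equiv 0$ (player 2 is indifferent, so player 2's IC is trivially satisfied with no payment). For the correlated distribution $\mu^\ast$ given by $\mu^\ast(0, 0) = \mu^\ast(1, 1) = \tfrac{1}{3}$ and $\mu^\ast(0, 1) = \mu^\ast(1, 0) = \tfrac{1}{6}$, the posteriors $\mu^\ast(a_2 \mid a_1)$ are nontrivial, and player 1 has expected deviation gain $\tfrac{1}{3}$ at each of the two signals. Writing the IC constraints explicitly, I will verify that the signal-dep minimum payment is $\tfrac{1}{3}$ (pay $\tfrac{1}{3}$ on-path per signal, zero off-path), while the minimum action-only payment is $\tfrac{2}{3}$: the two IC constraints combine to force $x_0 - x_1 \ge 2$ for the payment differences $x_{a_2} := P_1(0, a_2) - P_1(1, a_2)$, which in turn forces $P_1(0, 0), P_1(1, 1) \ge 1$ in the minimum-cost solution, giving an expected cost of $2 \cdot \tfrac{1}{3} = \tfrac{2}{3}$ instead of $\tfrac{1}{3}$.

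Finally, I will choose $u_\principal$ to translate this cost gap into a value gap in \eqref{eq:lp-optimal-ce-with-payments}. The symmetric choice $u_\principal = V\cdot \ind{a_1 = a_2}$ will not directly work because both LPs turn out to optimize at boundary vertices (uniform for small $V$, the pure-diagonal distribution $\mu' = (\tfrac{1}{2}, 0, 0, \tfrac{1}{2})$ for large $V$) with no $V$ at which $\mu^\ast$ is uniquely optimal; instead I will use a slightly asymmetric $u_\principal$ perturbing the diagonal weights, tuned so that $\mu^\ast$ becomes a vertex-optimal solution of the signal-dep LP. The separation then follows because the signal-dep LP achieves value $u_\principal(\mu^\ast) - \tfrac{1}{3}$, while the action-only LP's value is bounded above by the maximum over finitely many candidate distributions (uniform, pure Nash, $\mu^\ast$ itself with cost $\tfrac{2}{3}$, and $\mu'$). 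The hard part will be this last calibration: choosing $u_\principal$ so that $\mu^\ast$ uniquely optimizes the signal-dep LP while simultaneously no action-only CEP matches its value. I will handle it by directly solving both LPs at the chosen parameters, relying on the observation that since the signal-dep feasibility polytope strictly contains the action-only one in $(\mu, P)$ space, a generic perturbation of $u_\principal$ supporting a signal-dep-exclusive vertex yields the desired strict separation.
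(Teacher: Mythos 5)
Your fixed-distribution computation is correct and isolates exactly the right phenomenon: for $\mu^\ast(0,0)=\mu^\ast(1,1)=\tfrac13$, $\mu^\ast(0,1)=\mu^\ast(1,0)=\tfrac16$ in the anti-coordination game, the two IC constraints for player 1 combine to give $P_1(0,0)+P_1(1,1)\ge 2$ for action-only payments, hence expected payment $\ge \tfrac23$ versus $\tfrac13$ with signal-dependent payments. This is the same ``double duty'' intuition that drives the paper's example. However, there is a genuine gap in the step you yourself flag as the hard part, and your proposed repair does not close it. The proposition requires comparing \emph{optima}: the action-only principal is free to choose a different $\mu$, so you must show that \emph{no} pair $(\mu,P)$ with action-only $P$ attains the signal-dependent optimum. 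Your argument for this --- that the signal-dependent feasible region strictly contains the action-only one, so a ``generic perturbation of $u_\principal$ supporting a signal-dep-exclusive vertex'' separates the values --- does not work as stated, for two reasons. First, strict containment of feasible sets never by itself implies strict separation of optimal values; you need the exposed optimal face of the larger set to be disjoint from the smaller set, and the objective in \eqref{eq:lp-optimal-ce-with-payments} can only be perturbed in its $\mu$-coefficients (the payment coefficients are fixed at $-1$), so you cannot invoke the standard ``every vertex is exposed by some linear objective'' argument. Second, the action-only problem is not an LP at all: the IC constraints contain terms of the form $\mu(a)\,P_i(a_i',a_{-i})$, which are bilinear in $(\mu,P)$, so your plan to ``directly solve both LPs'' and to bound the action-only value by a maximum over finitely many candidate distributions (uniform, pure Nash, $\mu^\ast$, $\mu'$) has no justification --- the optimum of a bilinear program need not sit at any of those points.

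The paper sidesteps this calibration problem by brute force on the distribution side: it puts a prohibitively negative principal utility on one cell of a matching-pennies-like game, which forces $\mu(X,X)=0$ for any competitive profile, and then carries out an explicit case analysis (on whether $\mu(Y,X)>0$ or $=0$) showing that \emph{every} action-only CEP on the remaining support requires expected payment at least $\tfrac23$, while the exhibited signal-dependent CEP pays only $\tfrac13$. Something analogous is what your proof still needs: either an explicit principal utility together with a complete case analysis over all admissible $\mu$ for the action-only problem, or some other device that pins down the support. Until that is supplied, the separation is only established for the single distribution $\mu^\ast$, which does not prove \Cref{th:cep signal dependence}.
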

\begin{proof}[Proof sketch]
In the normal-form game below, P1 and P2 play matching pennies, and the principal is willing to pay a large amount to avoid a particular pure profile.
\begin{center}
    \begin{tabular}{c|rr}
    & \multicolumn{1}{c}{$X$} & \multicolumn{1}{c}{$Y$} \\ \hline 
        $X$ & $-\infty,0,1$ & $0,1,0$ \\ $Y$ & $0,1,0$ & $0,0,1$
    \end{tabular}
\end{center}
P1 chooses the row, P2 chooses the column. In each cell, the principal's utility is listed first, then P1's, then P2's. Now consider the following CEP: The principal mixes evenly between recommending $(X,Y)$, $(Y,X)$, and $(Y,Y)$. If the principal recommends $(Y,X)$, it also promises a payment of $1$ to P2 if P2 follows the recommendation $X$. This CEP has principal objective value $-1/3$, and no signal-independent CEP can match that value.
The full proof is given in \Cref{sec:app signal dependence}.
\end{proof}

In the language of \citet{Monderer03:k}, a CEP with $k = \E_{\va\sim\mu} P(\va)$ is called a $k$-{\em implementable correlated equilibrium}.\footnote{Instead of our condition of {\em ex-interim} IC, \citet{Monderer03:k} insist on {\em dominant-strategy} IC, that is, they insist that $U_i^P(s, s_i, a_{-i}) \ge U_i^P(s, a)$ for {\em every} $s$ and $a$. However, this requirement does not change anything in equilibrium, because one can always set $P(s, s_i, a_{-i})$ when $s_i = a_i$ and $s \ne a$ to be so large that playing $a_i$ becomes dominant. Indeed, \citet{Monderer03:k} do this to establish their results on implementation; \citet{Zhang24:Steering} do this in their steering algorithms; and we will do the same in \Cref{sec:steering}.} They show that all correlated equilibria are $0$-implementable,  but do not show the converse. Our results improve upon theirs by 1) showing the converse (\Cref{th:0-cep equivalence}), and 2) analyzing the $k > 0$ case, in particular, by incorporating a principal objective and showing how to compute the optimal CEP.

\subsubsection{Complete proof of \Cref{th:cep signal dependence}}
\label{sec:app signal dependence}
We first show that the claimed CEP is actually a CEP.
\begin{itemize}
    \item Conditioned on P1 being recommended $X$, P2's action is deterministically $Y$, against which $X$ is the best response for P1.
    \item Conditioned on P1 being recommended $Y$, P2's action is uniform random, against which $Y$ is a best response for P1.
    \item Conditioned on P2 being recommended $X$, P1's action is deterministically $Y$, against which the principal's promised payment of $1$ makes $X$ a best response for P2.
    \item Conditioned on P2 being recommended $Y$, P1's action is uniform random, against which $Y$ is a best response for P2.
\end{itemize}
It remains to show that the objective value $-1/3$ cannot be achieved by any CEP in which payments are signal-independent.
    We prove it by contradiction. 
    Suppose there is a CEP $(\mu, P)$ with signal-independent $P$ that achieves objective value at least $-1/3$. 
    Note that $\mu \in \Delta(A)$ is a correlated equilibrium of the game with utility function $U + P$. %
    Because the principal's utility without the payment part is always non-positive, for the objective to be at least $-1/3$, the expected payment to the two players $\sum_\va \mu(\va) P(\va)$ cannot exceed $1/3$.

Since $U_P(X, X) = -\infty$, we must have $\mu(X, X) = 0$.
We then analyze the incentive compatibility constraints for the two players: 
\begin{itemize}[leftmargin=2em]
    \item When P2 is recommended $X$, P2 knows that P1 is recommended $Y$ (because $(X, X)$ is not possible), so in order to ensure P2 has no incentive to deviate from $X$ to $Y$, we must have 
    \begin{align}
        U_2^P(Y, X) \ge U_2^P(Y, Y) ~ \iff ~ 0 + P_2(Y, X) \ge 1 + P_2(Y, Y) ~ \implies ~  P_2(Y, X) \ge 1. 
    \end{align}
    Since the expected payment is at least $\mu(Y, X)P_2(Y, X)$ but is at most $1/3$, we must have
    \[ \mu(Y, X) \le 1/3.\]
    \item When P2 is recommended $Y$, P2 believes that the recommendation to P1 is $X$ and $Y$ with probability $\mu(X, Y)$ and $\mu(Y, Y)$, respectively, so to prevent P2 from deviation $Y \to X$, the expected utilities of P2 under actions $Y$ and $X$ should satisfy:  
        \begin{align}
            & \underbrace{\mu(X, Y) \cdot \big(U_2(X, Y) + P_2(X, Y)\big) + \mu(Y, Y) \cdot (U_2(Y, Y) + P_2(Y, Y)\big)}_{\text{P2's expected utility when taking action $Y$ given recommendation $Y$}} \\
            & \ge  \underbrace{\mu(X, Y) \cdot \big(U_2(X, X) + P_2(X, X)\big) + \mu(Y, Y) \cdot \big(U_2(Y, X) + P_2(Y, X)\big)}_{\text{P2's expected utility when taking action $X$ given recommendation $Y$}} \\
            \iff ~ & \mu(X, Y) \cdot \big(0 + P_2(X, Y)\big) + \mu(Y, Y) \cdot (1 + P_2(Y, Y)\big) \\
            & \ge  \mu(X, Y) \cdot \big(1 + P_2(X, X)\big) + \mu(Y, Y) \cdot \big(0 + P_2(Y, X)\big) \\
            \implies ~ & \mu(X, Y) \cdot P_2(X, Y) + \mu(Y, Y) \cdot P_2(Y, Y) \\
            & \ge \mu(X, Y) \cdot \big(1 + P_2(X, X)\big) + \mu(Y, Y) \cdot P_2(Y, X) - \mu(Y, Y). 
        \end{align}
        Because payments are non-negative and $P_2(Y, X) \ge 1$, the above implies 
        \begin{align}
            & \mu(X, Y) \cdot P_2(X, Y) + \mu(Y, Y) \cdot P_2(Y, Y) \\
            & \ge \mu(X, Y) \cdot \big(1 + P_2(X, X)\big) + \mu(Y, Y) \cdot P_2(Y, X) - \mu(Y, Y). \\
            & \ge \mu(X, Y). 
        \end{align}
        So, the total expected payment to P2 is at least
        \begin{align} \label{eq:P2-payment-lower-bound}
            & \mu(X, Y) P_2(X, Y) + \mu(Y, Y) P_2(Y, Y) + \mu(Y, X) P_2(Y, X) \\
            & \ge \mu(X, Y) + \mu(Y, X). 
        \end{align}
    \item When P1 is recommended $Y$, P1 knows that P2's action is $X$ with probability $\mu(Y, X)$ and $Y$ with probability $\mu(Y, Y)$, 
    so to prevent P1 from deviation $Y\to X$, P1's expected utilities under actions $Y$ and $X$ should satisfy:
    \begin{align}
        & \underbrace{\mu(Y, X) \cdot \big(U_1(Y, X) + P_1(Y, X) \big) + \mu(Y, Y) \cdot \big(U_1(Y, Y) + P_1(Y, Y) \big)}_{\text{P1's expected utility when taking action $Y$ given recommendation $Y$}} \\
        & \ge \underbrace{\mu(Y, X) \cdot \big(U_1(X, X) + P_1(X, X) \big) + \mu(Y, Y) \cdot \big(U_1(Y, Y) + P_1(Y, Y) \big)}_{\text{P1's expected utility when taking action $X$ given recommendation $Y$}} \\
        \iff ~ & \mu(Y, X) \cdot \big(1 + P_1(Y, X) \big) + \mu(Y, Y) \cdot \big(0 + P_1(Y, Y) \big) \\
        & \ge \mu(Y, X) \cdot \big(0 + P_1(X, X) \big) + \mu(Y, Y) \cdot \big(1 + P_1(Y, Y) \big).
    \end{align}
    Since payments are non-negative, the above implies
    \begin{align} \label{eq:P1-payment-lower-bound}
        & \mu(Y, X) \cdot P_1(Y, X) + \mu(Y, Y) \cdot P_1(Y, Y) \\
        & \ge \mu(Y, X) \cdot P_1(X, X) + \mu(Y, Y) + \mu(Y, Y)\cdot P_1(Y, Y) - \mu(Y, X) \\
        & \ge  \mu(Y, Y) - \mu(Y, X). %
    \end{align}
\end{itemize}
Now, adding \eqref{eq:P2-payment-lower-bound} and \eqref{eq:P1-payment-lower-bound}, the total expected payment to the two players is at least
\begin{align}
    \sum_{\va} \mu(\va) P(\va) & \ge \mu(X, Y) + \mu(Y, X) + \mu(Y, Y) - \mu(Y, X) \\
    & = \mu(X, Y) + \mu(Y, Y) = 1-\mu(Y, X) \ge 2/3 > 1/3
\end{align}
because $\mu(Y, X)\le 1/3$, which contradicts the condition that the expected payment is at most $1/3$.

\subsection{Proof of Theorem \ref{thm:CEP-upper-bound}}
\label{proof:CEP-upper-bound}
Suppose that the agents run {\em no contextual swap regret} algorithms. Concretely, an agent has no contextual swap regret if 
\begin{align}
    \hat R_i(t, s_i) := \max_{\psi_i : A_i \to A_i} \sum_{\tau \le t} \sum_{\vs_{-i} \in S_{-i}} \mu^t(\vs) \Big[ U_i^\tau(\vs, (\psi_i \circ \phi_i)(s_i), \phi_{-i}^\tau (\vs_{-i})) - U_i^\tau(\vs, \phi^\tau(\vs)) \Big] \le \eps T.
\end{align}
where $\eps \to 0$ as $T \to \infty$. For typical no contextual swap regret algorithms, $\eps = \cO(\frac{C}{\sqrt{T}})$ where $C$ depends on the game and the number of signals.  
Clearly, contextual swap regret is a stronger benchmark than the standard (external) notion of regret.

Then, after sufficiently many rounds $T$, by definition, we have that the correlated strategy profile
\begin{align}
    \pi := \frac{1}{T} \sum_{t=1}^T (\mu^t, P^t, \phi^t) \in \Delta(S \times [0, B]^{[n] \times A \times A} \times A_1^{S_1} \times \dots \times A_n^{S_n})
\end{align}
is a non-canonical $(\eps \cdot \max_i |S_i|)$-CEP in the sense of \Cref{sec:revelation}. Therefore, by the revelation principle for CEPs (\Cref{prop:revelation}), the principal objective value is bounded by that of the best $(\eps \cdot \max_i |S_i|)$-CEP, which is then bounded by $F^*(\Gamma) + \cO(\eps)$.

\subsection{Proof of Theorem \ref{th:steering}}
\label{proof:steering}
    From the analysis of \Cref{th:mp no regret}, \Cref{alg:mp no regret} learns a game to precision $\eps = \poly(M, C)/\sqrt{L}$. (Note that we do not express $\eps$ as a function of $T$ because %
    $T$ is now the total number of rounds across both learning and steering stages. )

    Since $\tilde U$ and $U$ differ by only $\eps$ (up to agent-independent terms), every CEP of $\tilde \Gamma$ is a $2\eps$-CEP of $\Gamma$. 
    The payment function $P_i^t$ for the steering stage then ensures that, when given signal $s_i$, it is {\em dominant} for agent $i$ to play $a_i = s_i$. Formally, regardless of how other agents act, we have
    \begin{align}
        U_i^t(\vs, a_i, \va_{-i}) - U_i^t(\vs, a_i', \va_{-i}) \ge \rho, \quad \forall \vs \in A, a_i = s_i, \forall a_i'\ne s_i, \forall \va_{-i} \in A_{-i}.
    \end{align}
    Further, from the analysis of \Cref{th:mp no regret}, agent $i$'s regret against following signals $s_i \ne \bot$ is always nonnegative. Therefore, by agent $i$'s regret bound, there are at most $C \sqrt{T}/\rho$ rounds on which agent $i$ fails to obey recommendation $s_i$ in the steering stage. By a union bound, there are therefore $mnC\sqrt{T}/\rho$ rounds in the steering stage on which $\va^t \ne \vs^t$. Thus, the principal's suboptimality is bounded by
    \begin{align}
    F^*(\Gamma) - F(T) & \, \le \, \underbrace{F^*(\Gamma) - F^*(\tilde \Gamma)}_{(1)} \, +\,  \underbrace{\frac{(2n+1)nML}{T}}_{(2)} \, + \,  \underbrace{n(2 \eps + \rho)}_{(3)} \, + \, \underbrace{\frac{(2n+1)mnC}{\rho\sqrt{T}}}_{(4)} \\
    & \, \le \, \poly(M) \cdot \qty(\frac{L}{T} + \frac{1}{\sqrt L} + \rho + \frac{1}{\rho\sqrt{T}}) 
    \end{align}
    where the four terms are:
    \begin{enumerate}
        \item[(1)] The difference between the optimal objectives on games $\Gamma$ and $\tilde \Gamma$. It is at most $2n\eps$ because $F^*(\Gamma) = F(\mu^*, P^*) \le F(\mu^*, P^* + 2\eps) + 2n\eps \le F(\tilde \mu^*, \tilde P^*) + 2n\eps = F^*(\tilde \Gamma) + 2n\eps$.  
        \item[(2)] The suboptimality and payments in the utility learning stage,
        \item[(3)] The bonus payments to ensure strict incentive compatibility in the steering stage, and
        \item[(4)] The suboptimality and payments in rounds on which $a^t \ne s^t$.
    \end{enumerate}
    Setting $\rho = T^{-1/4}$ and $L = T^{2/3}$ then completes the proof.

\end{document}